\DeclareMathOperator{\frechet}{d_{\mathcal{F}}}
\DeclareMathOperator{\sfrechet}{d_{\mathcal{S}}}
\DeclareMathOperator{\ofrechet}{d_{\mathcal{O}}}
\DeclareMathOperator{\tfrechet}{d_{\mathcal{T}}}
\DeclarePairedDelimiter\abs{\lvert}{\rvert}
\DeclarePairedDelimiter\norm{\lVert}{\rVert}%
\let\oldabs\abs
\def\abs{\@ifstar{\oldabs}{\oldabs*}}
\let\oldnorm\norm
\def\norm{\@ifstar{\oldnorm}{\oldnorm*}}
\title{The $ k $-outlier Fréchet distance \footnote{The work was supported by the PhD School "SecHuman - Security for Humans in Cyberspace" by the federal state of NRW.}}
\titlerunning{The $ k $-outlier Fréchet distance}
\author[1]{Maike Buchin}
\author[1]{Lukas Plätz}
\affil[1]{Faculty of Computer Science, Ruhr-Universität Bochum\\
	\texttt{ \{maike.buchin, lukas.plaetz\}@rub.de}}
\authorrunning{M. Buchin and L. Plätz}
\begin{document}
	
	\maketitle
	
	\begin{abstract}
		The Fréchet distance is a popular metric for curves; however, its bottleneck character is a disadvantage in many applications.
		Here we introduce two variants of the Fréchet distance to cope with this problem and expand the work on shortcut Fréchet distances.
		We present an efficient algorithm for computing the new distance measure.
	\end{abstract}
	
	\section{Introduction}
	The analysis of curves is a growing field of study in computational geometry. The Fréchet distance is a popular metric between two curves. However, working with real-life data brings errors in measurement with it; and the Fréchet distance is quite sensitive to such outliers as it is defined as the largest distance in a minimal correspondence between the two curves. Hence outliers in the data may determine the Fréchet distance for the entire curves.
	
	There are three different approaches to handle such error: the partially measured \cite{buchin2009exact}, the averaged \cite{buchin2007computability, maheshwari2018approximating} and the shortcut Fréchet distance \cite{driemel2013jaywalking}. In this paper, we want to focus on the last one. 
	Driemel and Har-Peled introduced the shortcut Fréchet distance. It allows replacing parts of one curve through straight line segments, called shortcuts. Deciding the shortcut Frechet distance between two curves was shown NP-hard by Buchin, Driemel and Speckmann in \cite{buchin2014computing}. In the vertex restricted case, these line segments must start and end at vertexes. For this case Driemel and Har-Peled showed that the vertex restricted directed $k$-shortcut Fréchet distance, $k$ counting the number of shortcuts, can be approximated in near-linear time. Buchin, Driemel and Speckmann gave an algorithm to decide the vertex restricted shortcut distance exactly in $O(n^3 \log n)$ time. An open problem is to extend the shortcut distance to allow shortcuts on both curves without completely short-cutting both curves.
	Avraham et al. considered the discrete problem \cite{avraham2015discrete} and faced the same problem for the two-sided case. To resolve it, they forbid simultaneous movement on the curves.
	In \cite{alt2003matching} Alt et al. discuss first how to define and compute the Fréchet distance between a curve and a graph, and then between two graphs. It introduces a free space surface for these, which is similar in spirit to the outlier free space we define here.
		
	\begin{table}[]
	    \centering
	    \begin{tabular}{c|c|c}
	        one-sided & shortcut & $k$-outlier \\
	        \hline
	        discrete & $O((n+m)^{6/5+\epsilon})$ \cite{avraham2015discrete} & naive $ O(nmk  \log n )$ \\
	        continuous & NP-Hard \cite{buchin2014computing} | vertex restricted $ O(n^5 \log n) $ \cite{driemel2013realistic} & naive $ O((n^2mk + nmk^2) \log n) $ \\
	        \hline
	        \hline
	        two-sided & shortcut & $k$-outlier \\
	        \hline
	        discrete & $O((m^{2/3}n^{2/3}+m+n) \log^3 (m + n)) $ \cite{avraham2015discrete} & naive $ O(n m k^2 \log n)$  \\
	        continuous & -- & $ O((n^2mk + nmk^3) \log n) $ [Thm. \ref{thm:main}]
	    \end{tabular}
	    \caption{Computational complexity of the shortcut and outlier computation problem}
	    \label{tab:comparison}
	\end{table}
	
	Some results on curve simplification are special cases of our problem in the sense that the directed outlier distance of a curve to itself is considered here.
	In \cite{imai1988polygonal} Imai and Iri for simplifying a curve introduce an associated graph to the curve. This is also called the shortcut graph of a curve, which we will also later use in our algorithm.
	Bringmann and Chaudhury \cite{bringmann2019polyline} considered curve simplification with vertex restricted shortcuts. They showed that this simplification needs $ O(n^3) $ time. They also showed that for $ L_p $ with $ p \neq 2, \infty $ this cannot be done in $ O(n^{3- \varepsilon}) $ for all $ \varepsilon > 0 $ unless $ \forall \forall \exists \! - \! \text{OV}$ hypothesis fails. Kerkhof, Kostitsyna, Löffler, Mirzanezhad and Wenk \cite{vandekerkhof2019global} achieved the same runtime for their simplification algorithm. 
	
	We will address the two open problems of allowing shortcuts on both curves and taking into account the length of the shortcuts. For this, we present the $k$-outlier distances. These distances allow ignoring $k$ outliers on one or both curves and computing the optimal Fréchet distance given the number of vertices to leave out. Switching from counting shortcuts to counting vertices makes it possible to compute a symmetrical shortcut distance. Further we allow other starting and ending points, hence it can also be seen as a partial Fréchet distance.
	Table~\ref{tab:comparison} compares our result to previous results and naive algorithms.

	\section{continues \textit{k}-outlier Fréchet distance}
	\subsection{Outlier free space cell}
	We start by defining curves, shortcuts and the Fréchet distance.
	\begin{definition}
		Let $ X = \langle p_0, p_1, \dots, p_n \rangle $ be a polygonal curve. We consider $X$ as a continuous map $ X \colon [0, n] \rightarrow \mathbb{R}^d $, where $ X(i) = p_i $ for $ i \in \mathbb{N} $, and the $i$-th edge is linearly parametrized as $X (i + \lambda) = (1 - \lambda)p_i + \lambda p_{i+1} $. We denote the \textbf{shortcut} $\langle p_i, p_a \rangle $ for $ i < a$ as the straight line segment connecting the points. 
		
		A \textbf{reparametrisation} $ (\sigma, \theta) $ of two curves $X$ and $Y$ is a pair of continuous non-decreasing surjective functions, where $ \sigma $ and $ \theta $ map from $[0, 1]$ to $[0, n]$ and $[0, m]$, respectively.
		The \textbf{Fréchet distance} between two polygonal curves $X$ and $Y$ %, with $n$ and $m$ edges, respectively, 
		is the maximum distance attained by optimal reparameterisations, i.e. $\frechet(X, Y) := \inf_{(\sigma, \theta )}  \max_t \norm{X (\sigma (t)) - Y(\theta (t))}$. 
		A reparametrisation of maximum distance at most $ \varepsilon $ is called an {\bf $\varepsilon$-realization}.
	\end{definition}
	Now we can define our outlier distances.
	\begin{definition}[Outlier distance]
		A curve $ \overline{X} := \langle \overline{p_1}, \overline{p_2}, \dots, \overline{p_\ell} \rangle $ is in the set of \textbf{$ k $-outlier curves} $ C_k(X) $ if and only if the points $ \langle \overline{p_1}, \overline{p_2}, \dots, \overline{p_\ell} \rangle $
		are a subsequence of $ \langle p_1, p_2, \dots, p_n \rangle $ and $n-\ell \leq k$. 
		The \textbf{directed $ k $-outlier Fréchet distance} $ \ofrechet (k, X, Y) := \min_{\overline{Y} \in C_k(Y)} \frechet(X, \overline{Y}) $. 
		
		The set of \textbf{$ k $-outlier curve tuples} $ T_k(X, Y) := \bigcup_{i=0}^k C_i(X) \times C_{k - i}(Y) $ contains those curves, where the number of outliers of both curves is at most $ k $.
		The \textbf{undirected $ k $-outlier Fréchet distance} is $ \tfrechet (k, X, Y) := \min_{\left(\overline{X}, \overline{Y}\right) \in T_k(X, Y)} \frechet\left(\overline{X}, \overline{Y}\right) $.
	\end{definition}
	
	Note that we do not (necessarily) restrict to start or end $ \overline{X} $ with the same vertices as~$X$. We will later see that allowing this does not increase the computation time and that the directed case is a specialization of the undirected case. Hence in the remainder of the paper, the $ k $-outlier Fréchet distance will refer to the undirected case.

    \begin{remark}[Axioms of Metrics]
        The identity of indiscernibles does not hold for Fréchet curves because with the $ k $-outlier Fréchet distances we cannot distinguish between two curves with $ k $ different points.
        The undirected case is symmetrical because the set of $ k $-outlier curves tuples is symmetrical.
        The triangle inequality is not satisfied, this can be shown a counter-example. See figure \ref{fig:dreiecksungleichung}. For this example even $ \tfrechet(k, X, Y) + \tfrechet(k, Y, Z) \geq c\tfrechet(2k, X, Z)$ is not satisfied.
    \end{remark}
    
    \begin{figure}[htbp]
		\centering
		\includegraphics[width=0.7\linewidth]{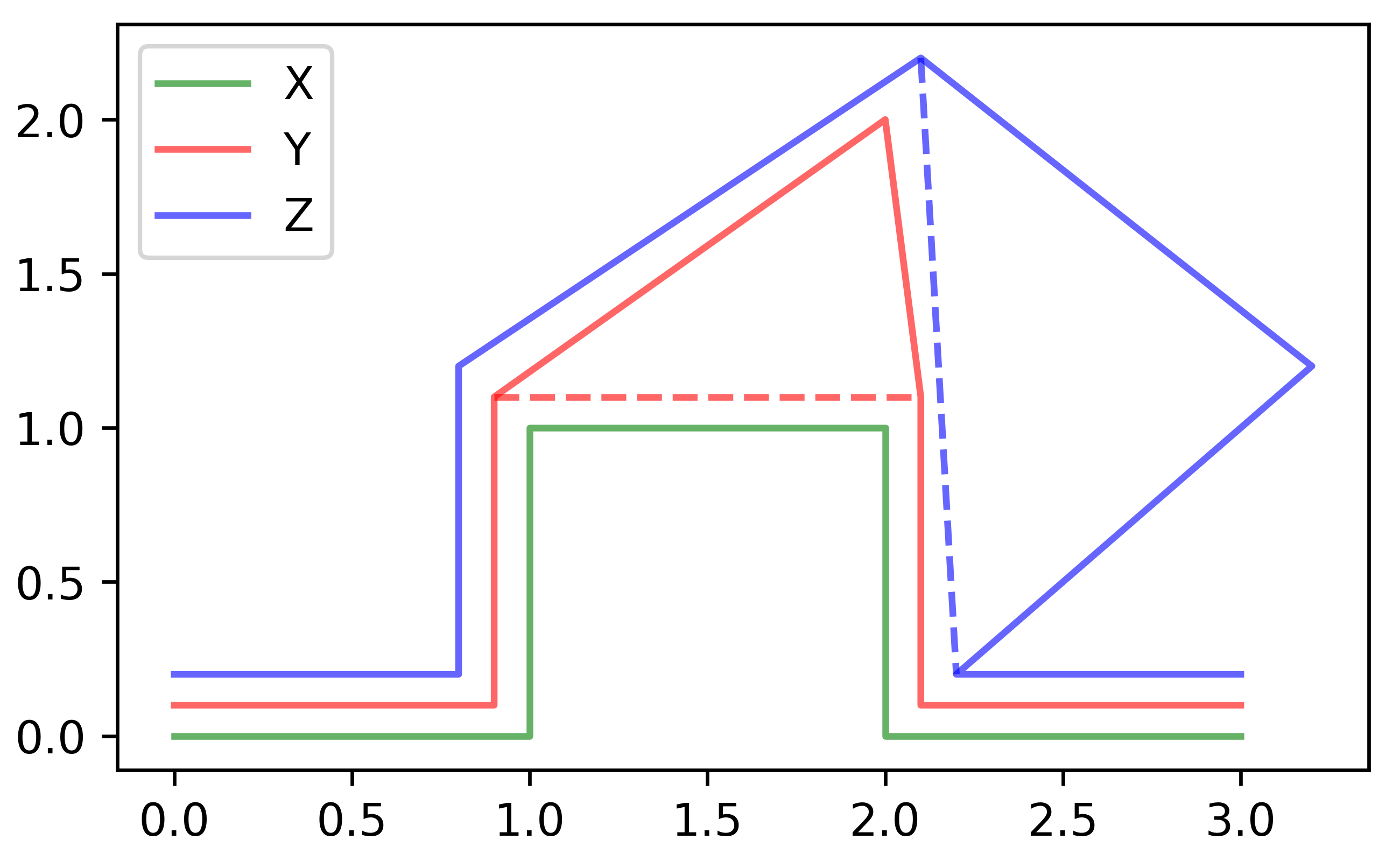}
		\caption{Example of Driemel and Har-Peled in \cite{driemel2013jaywalking}. $X$ and $Y$ are close under $ \sfrechet $ and $ \tfrechet $ for $k = 1$. The same is true for $Y$ and $Z$. For $X$ and $Z$ they are only close under $ \tfrechet $ for $k \geq 3$.}
		\label{fig:dreiecksungleichung}
	\end{figure}
    	
	With the shortcuts, we introduce new edges on our curves and with them, in turn, new cells in our free space by allowing these shortcuts. To distinguish the terminology of the classical and the outlier Fréchet distance, we add the word outlier to the new terms.
	
	We present a dynamic program to decide the $k$-outlier Fréchet distance. We introduce the height in the outlier free space to keep track of the number of outliers.
	We start by initialising the outlier free space with the starting points.
	Here we also allow omitting vertices at the beginning of both curves, with the number of omitted points defining the starting height.
	Then we compute the reachable intervals for each outlier cell with the starting point and the previously computed reachable outlier intervals. Here every cell defines how much height has to be added to the result.
	After the computation of an outlier cell, we have the classical reachable free space interval of that cell. But to compute the outlier interval we need all adjacent cells. We combine then all reachable intervals into the reachable outlier intervals.
	In the end, we check if any ending point is reachable. They are similar to the starting points but allow to use less than $k$ outliers, too.

	\begin{definition}[Outlier Cell]
		An \textbf{outlier cell} $ C[(i, j), (a, b), h] $ is defined as the classical free space cell of the edges $ \langle p_i, p_a \rangle $ and $ \langle p_j, p_b \rangle $ and with an added \textbf{height} $ h $. See \ref{fig:einzelnezelle} for a visualisation.
		With $R^h(C)$ and $R^v(C)$ we describe the \textbf{horizontal} and \textbf{vertical reachable interval} of an outlier cell $ C $.
		The \textbf{length} $ \text{L} $ of a shortcut $ \langle p_i, p_a \rangle $ with $ a > i $ is defined as $ a - i - 1 $. It counts the vertices skipped by the shortcut. For the special case $ a = i $ we set the length to $ 0 $. See \ref{fig:abkurzung} for an example.
		The \textbf{horizontal length} $ \text{L}_h(C[(i, j), (a, b), h]) := \text{L}(\langle p_i, p_a \rangle) $ and the \textbf{vertical length} $ \text{L}_v(C[(i, j), (a, b), h]) := \text{L}(\langle p_j, p_b \rangle) $.
		A \textbf{starting point} $ (i, j) $ of an outlier curve tuple is the pair of its first vertices. The height of that point in the outlier free space is simply the sum of both indices. An \textbf{ending point} $ (n - i, m - j)$ is a pair of the last vertices of an outlier curve tuple. The height of this point has to be below $ k - i - j $ to allow the last points to be left out. 
		Furthermore a \textbf{vertical} and \textbf{horizontal reachable outlier interval} $I[(i, j), (i, b), h] := \bigcup_{l \in [k+1]} R^v(C[(i, j), (i+l, b), h]) $ and $I[(i, j), (a, j), h] := \bigcup_{l \in [k+1]} R^h(C[(i, j), (a, j+l), h]) $.
		The \textbf{horizontal free space interval} $ F[(i, j), (a, j)] := \lbrace p \in [i, a] \times \lbrace j \rbrace \,|\, \norm{X(x_p) - Y(y_p)} \leq \varepsilon \rbrace $ only depends on the edge $ \langle p_i, p_a \rangle $ and the point $ p_j $. The similar is true for the \textbf{vertical free space interval}  $ F[(i, j), (i, b)] := \lbrace p \in  \lbrace i \rbrace \times [j, b] \,|\, \norm{X(x_p) - Y(y_p)} \leq \varepsilon \rbrace $ 
		We indicate by using an index twice that this is defined by a vertex rather than an edge. 
		An \textbf{outlier point} in the free space diagram is defined as $ P[(i, j), h] := I[(i, j), (a, j), h] \cap I[(i, j), (i, b), h] $.
	\end{definition}

	\begin{figure}[htbp]
		\centering
		\includegraphics[width=0.9\linewidth]{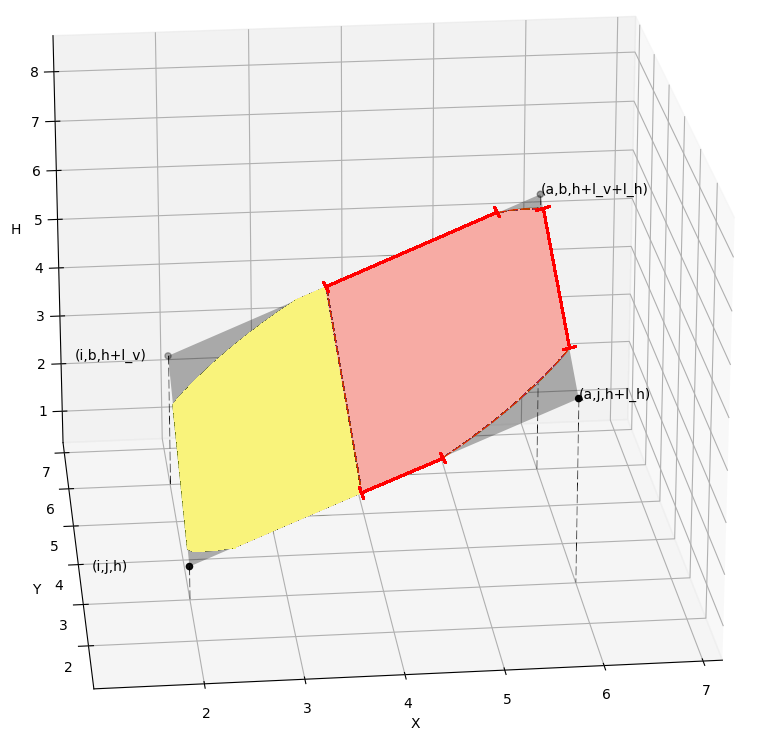}
		\caption{We visualise an outlier free space cell in 3-dimensional space. The four points of the cell $ C[(i, j), (a, b), h] $ are $ (i, j, h), (i, b, h + \text{L}_v), (a, j, h + \text{L}_h) $ and $(a, b, h + \text{L}_v + \text{L}_h) $. The lower left is at the same height as the cell. The upper left is lifted by the vertical length $ \text{L}_v $, the lower right by the horizontal length $ \text{L}_h $ and the upper right by both lengths. The reachable free space is shown in red.}
		\label{fig:einzelnezelle}
	\end{figure}
		
	\begin{figure}[htbp]
		\centering
		\includegraphics[width=0.7\linewidth]{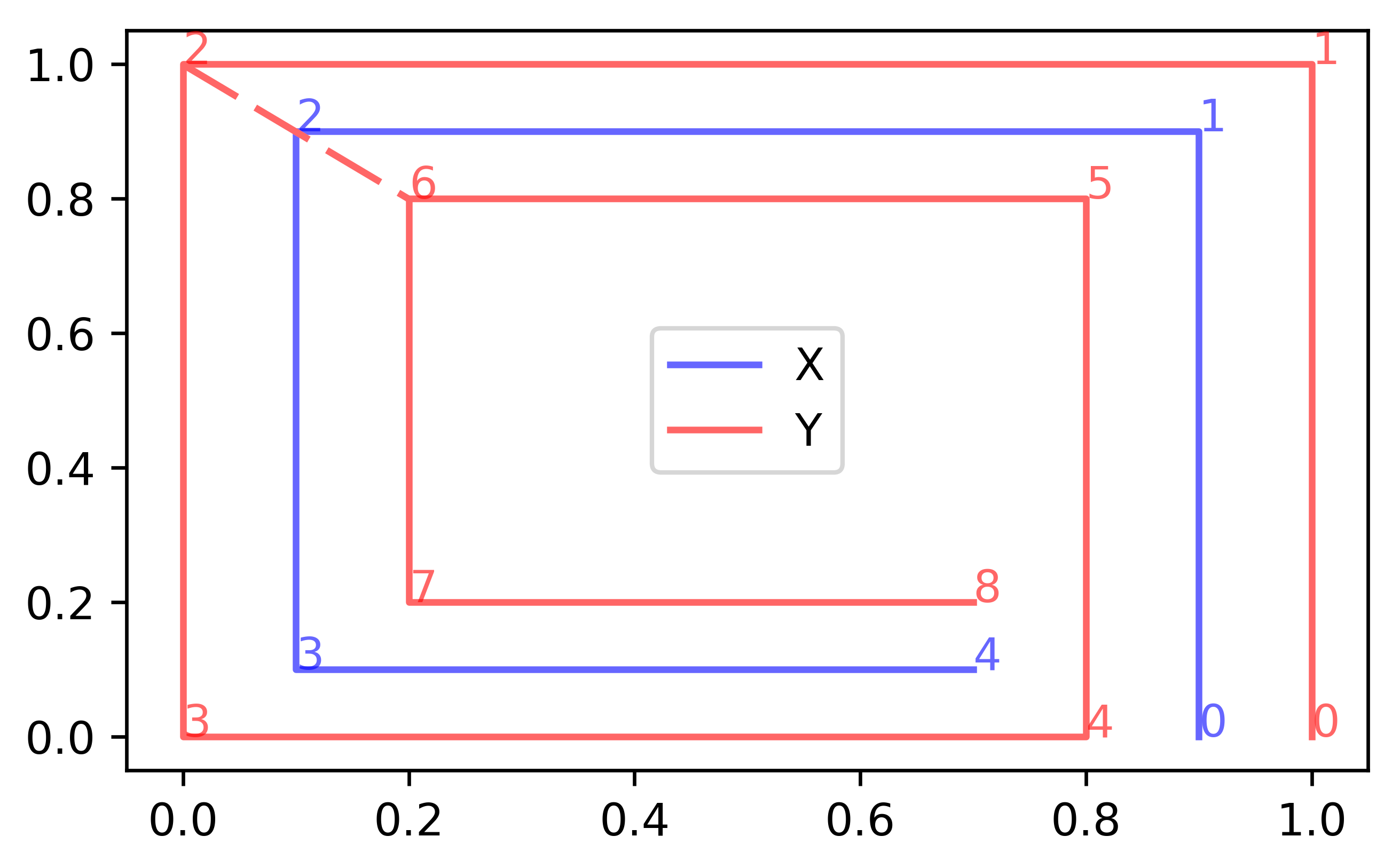}
		\caption{Two curves with a possible shortcut (dashed) on the red curve of length 3.}
		\label{fig:abkurzung}
	\end{figure}
	
	Figures~\ref{fig:00} to \ref{fig:12} show examples of two classical and an outlier free space diagram.
	
	\begin{figure}[htbp]
		\centering
		\includegraphics[width=0.9\linewidth]{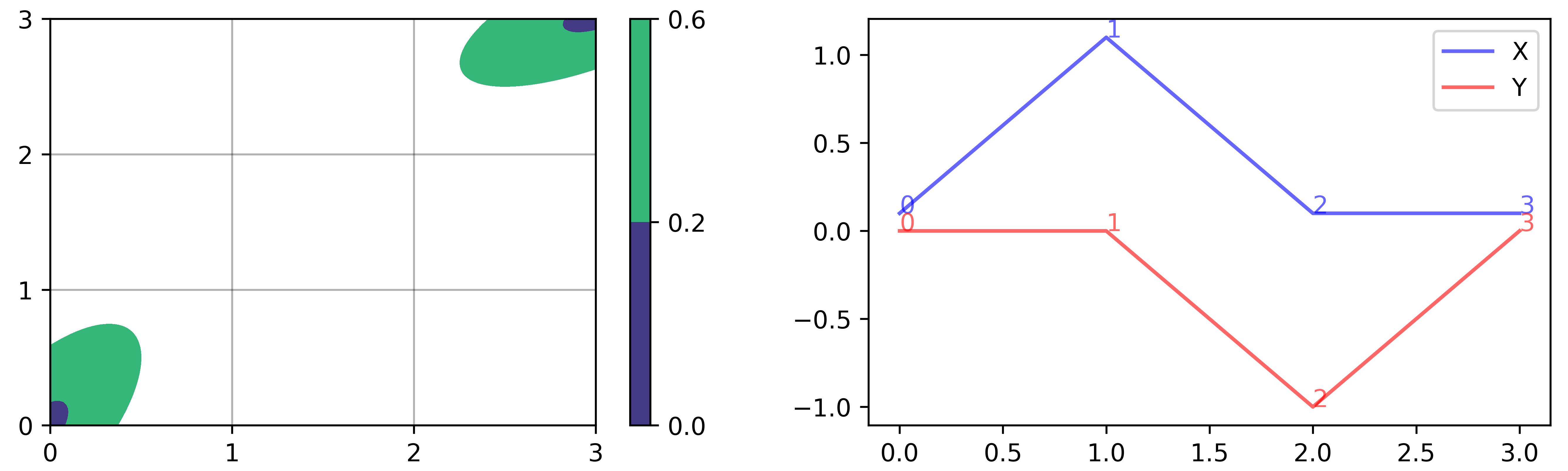}
		\caption{On the right side are the curves and on the left is the free space diagram for two values.}
		\label{fig:00}
		\includegraphics[width=0.9\linewidth]{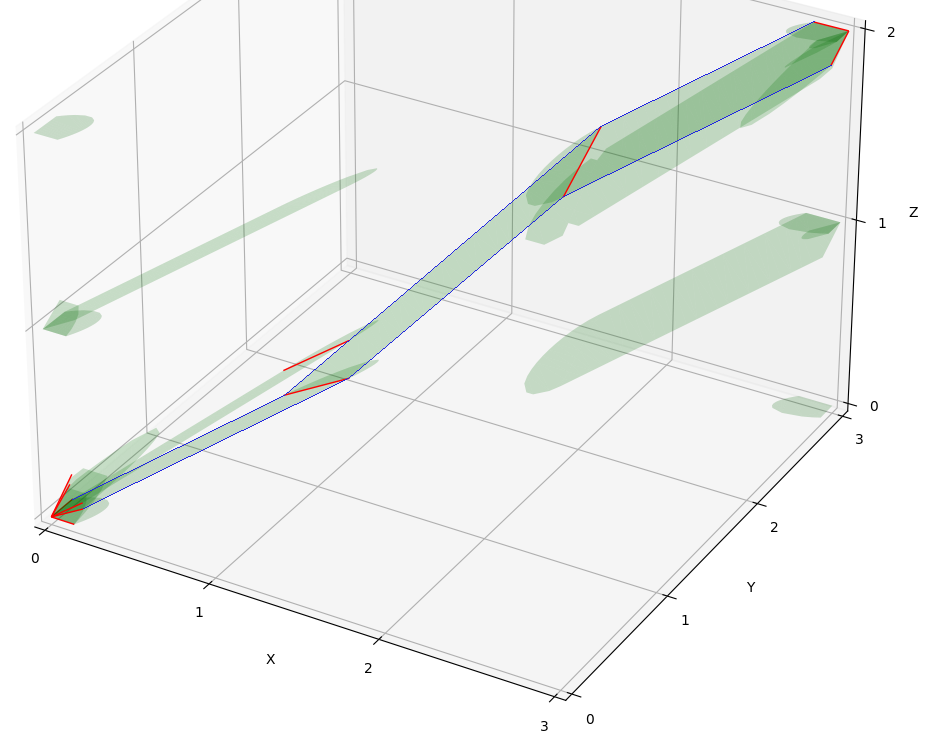}
		\caption{The outlier free space diagram for the same curves as in Figure~\ref{fig:00}. At the plane $z = 0$, we can see the classical free space diagram. The reachable free space intervals are drawn in red. The free space leading to an ending point of the curve is marked with a blue line.}
		\label{fig:outlierfreespace}
		\includegraphics[width=0.9\linewidth]{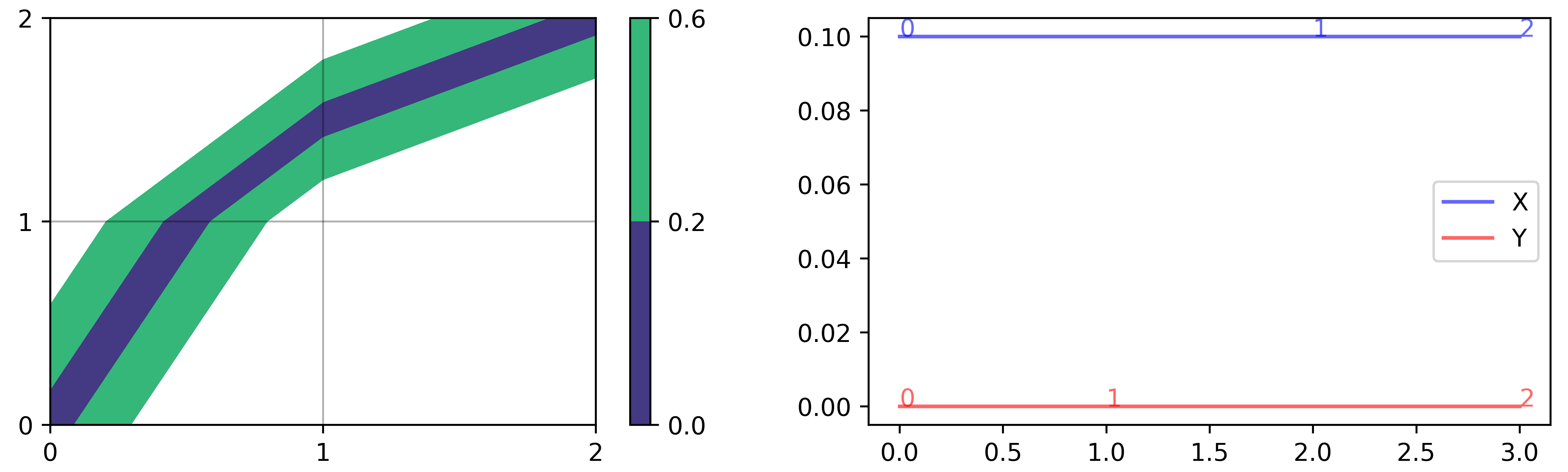}
		\caption{On the right side are the curves realizing the 2-outlier Fréchet distance of the curves in Figure~\ref{fig:00}. On the left is the classical free space of them.}
		\label{fig:12}
	\end{figure}

	\subsection{Algorithm}
	
	For an easier description, we will introduce the term predecessor.	
    Here and in the following we use $ [n] := \lbrace 0, \dots, n \rbrace $ for $ n \in \mathbb{N} $ as a shorthand.
	\begin{definition}
		The \textbf{predecessors} of a cell $ C[(i, j), (a, b), h] $ are the outlier intervals $ I[(i, j), (a, j), h] $ and $ I[(i, j), (i, b), h] $.
		The predecessors of an outlier interval $ I[(i, j), (i, b), h] $ and $ I[(i, j), (a, j), h] $ are the cells $ C[(i-l-1, j), (i, b), h-l] $ and $ C[(i, j - l - 1), (a, j), h-l] $ for $ l \in [k] $ respectively.
		The predecessors of a point $ P[(i, j), h] $ are the outlier intervals $ I[(i-l-1, j), (i, j), h-l] $ and $ I[(i, j - l - 1), (i, j), h-l] $ for $ l \in [k] $.
	\end{definition}
	In the following, we only consider curves with at least one edge and $ k \geq 1 $.
	With the terminology introduced in the previous section, we can decide the reachability of a point with the predecessors of the point. To compute an outlier interval applying two times its predecessors gives us a dependence only on $ O(k) $ intervals.
	
	We start Algorithm~\ref{Alg1} by initializing our reachable free space in the first loop and placing the starting points in the second loop. Then we can enforce the correct ordering of cells with a lexicographical ordering of the cells by their five identifying values. In this ordering, we compute the reachable outlier intervals in the third loop. After computing all cells we have to check in the last loop if any of the ending points are in the reachable free space below the needed height. 
	We test if an ending point is reachable by computing the reachable outlier interval ending with that point and accounting for the length of that interval.
	
		\begin{algorithm}[H]
		    \label{Alg1}
			\DontPrintSemicolon
			\caption{ Algorithm($ k, X, Y, \varepsilon $)}
			\tcp{Initialize Free Space}
			\ForEach(\tcp*[f]{Outlier Intervals}){$ (i, j, l, h) \in [n] \times [m] \times [k+1] \times [k] $}{
				Compute $ F[(i,j), (i + l, j)] $ and $ F[(i,j), (i, j + l)] $ \;
				Set $ I[(i, j), (i + l, j), h], I[(i, j), (i, j + l), h] $ and $ P[(i, j), h] $ as empty \;
				%  }
			}
			\tcp{Adding Reachable Starting Points}
			\For(\tcp*[f]{Starting Points}){$ (i, j) \in [k]^2 $ with $ i + j \leq k $}{
				\lIf{ $ \norm{X(x_i) - Y(y_j) } \leq \varepsilon $ }{
				    Set $ (i, j) \in P[(i, j), i + j)] $
				    } 
			}
			\tcp{Computing the Reachable Space}
			\ForEach(\tcp*[f]{Cell $C[(i, j), (a, b), h] $}){$ (i, j, s, t, h) \in [n] \times [m] \times [k+1] \times [k+1] \times [k] $}{
				$ a = i + s, b = j + t $ and $ C = C[(i, j), (a, b), h]$ \;
				Update $ I[(i,b), (a, b), h+\text{L}_v] $ and $ I[(a, j), (a, b), h+\text{L}_h] $ with $ R^h(C) $ and $ R^v(C) $
				\;
			}
			\tcp{Testing Ending Points}
			\For(\tcp*[f]{Ending Points}){$  (i, j, h) \in [k]^3 $ with $ i + j \leq k, i + j + h \leq k  $}{
				\lIf{$ (n - i, m - j) \in P[(n - i, m - j), h)] $}{
					\Return True
				}
			}
			\Return False
		\end{algorithm}
	\begin{remark}
		There are several possibilities to alter the algorithm to compute different distances.
		The first would be to set $ k $ to $ 0 $. Then the algorithm is the same as algorithm one of Alt and Godau in \cite{alt1995computing} and the runtime collapses to $ O(nm) $. 
		
		The second would be to set $ a $ to $ i + 1 $ instead of iterating over it in the third loop. With this, we forbid the omission of a vertex on the first curve. The runtime of the third loop becomes $ O(nmk^2) $. 
		Setting also the $ i $ to $ 0 $ in the second and fourth loop we disallow any changes to the first curve and hence we get the directed outlier Fréchet distance.
		
		The third would be to also change the counting from points to shortcuts and only allow the starting and ending points $ (0, 0, 0)$ and $ (n, m, h) $ for $ h \in [k] $. Then the algorithm would decide the directed vertex restricted $ k $-shortcut Fréchet distance with runtime $ O(n^2mk) $. 
	\end{remark}
	For the correctness of the algorithm, we first show that reachability decides the $ k $-outlier distance.
	The proof uses two ideas. The first idea is to have for every pair of matched edges in the reparametrisation of an outlier curve tuple an outlier cell representing the free space of it. The second idea is to only use cells below or at the height $ k $. With these outlier cells and the outlier free space diagram we embedded the free space diagram of every outlier curve tuple in the outlier free space diagram and can count the number of outliers.
	\begin{lemma}
    	The $ k $-outlier distance is at most $ \varepsilon $ if and only if an ending point is reachable in the outlier free space below or on height $ k $.
	\end{lemma}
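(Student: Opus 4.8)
The plan is to prove both implications by exhibiting a correspondence between the monotone reachability paths that certify a small classical Fréchet distance for a fixed outlier tuple and the reachable points of the height-augmented outlier free space. The bridge in both directions is the classical characterisation of Alt and Godau \cite{alt1995computing}: for a fixed pair of curves $A, B$ one has $\frechet(A, B) \le \varepsilon$ if and only if there is a monotone (componentwise non-decreasing) path inside the classical free space from the lower-left to the upper-right corner. The key observation, already indicated before the statement, is that for an outlier tuple $(\overline{X}, \overline{Y}) \in T_k(X, Y)$ whose kept vertices are $p_{i_0} < \dots < p_{i_p}$ and $p_{j_0} < \dots < p_{j_q}$, each cell of the classical free space diagram of $(\overline{X}, \overline{Y})$, namely the cell of the edges $\langle p_{i_{r-1}}, p_{i_r}\rangle$ and $\langle p_{j_{s-1}}, p_{j_s}\rangle$, is literally an outlier cell $C[(i_{r-1}, j_{s-1}), (i_r, j_s), h]$. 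Stacking these cells at the height dictated by the omitted vertices embeds the whole diagram of $(\overline{X}, \overline{Y})$ as a monotone surface inside the outlier free space.

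For the direction $(\Rightarrow)$ I would start from an optimal tuple $(\overline{X}, \overline{Y})$ realising $\tfrechet(k, X, Y) \le \varepsilon$ and apply Alt--Godau to obtain a monotone path $\pi$ from the corner $(i_0, j_0)$ to the corner $(i_p, j_q)$ in the embedded diagram. Following $\pi$ from cell to cell, I would show by induction along $\pi$ that each corner it reaches is a reachable outlier point, using that the step completing a horizontal shortcut $\langle p_{i_{r-1}}, p_{i_r}\rangle$ (resp.\ a vertical shortcut) corresponds to propagating the cell's reachable interval into the next column (resp.\ row) through an outlier interval $I$, whose definition unions over the admissible shortcut lengths $l \in [k+1]$ and raises the height by the skipped length $\text{L}_h$ (resp.\ $\text{L}_v$). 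The crucial accounting is that the height accumulated along $\pi$ equals the number of internally omitted vertices, the starting height $i_0 + j_0$ records the omitted prefixes, and the ending test at $(n - i, m - j) = (i_p, j_q)$ with $i = n - i_p$, $j = m - j_q$ adds the omitted suffixes, so that the ending-point condition $i + j + h \le k$ is exactly the constraint $(\overline{X}, \overline{Y}) \in T_k(X, Y)$ that total omissions are at most $k$.

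For the direction $(\Leftarrow)$ I would reverse this: given a reachable ending point at a height $h$ with $i + j + h \le k$, I would trace the reachability backwards through the predecessor relations of points, intervals and cells down to a reachable starting point $(i_0, j_0)$ at height $i_0 + j_0$. The sequence of horizontal and vertical shortcuts visited on the way selects the kept vertices and hence defines $\overline{X}$ and $\overline{Y}$; the height bookkeeping, read in the same way as above, guarantees that the total number of omitted vertices equals $i + j + h \le k$, so $(\overline{X}, \overline{Y}) \in T_k(X, Y)$. Projecting the traced reachability path down to the plane $z = 0$ yields a monotone path in the classical free space of $(\overline{X}, \overline{Y})$, whence $\frechet(\overline{X}, \overline{Y}) \le \varepsilon$ by Alt--Godau and therefore $\tfrechet(k, X, Y) \le \varepsilon$.

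The main obstacle is the height bookkeeping together with the verification that interval propagation in the augmented diagram really encodes continuous monotone reachability. Concretely, I expect the delicate points to be (i) checking that the two notions of height — the additive one built up cell-by-cell through $\text{L}_h, \text{L}_v$ and the global one counting omitted vertices of a fixed tuple — coincide under the embedding, including the separate treatment of omitted prefixes via the starting height and omitted suffixes via the ending-point inequality; and (ii) proving that the recursive predecessor-based notion of reachability (which unions reachable subintervals over all shortcut lengths $l \in [k+1]$) is sound and complete for the existence of a genuine monotone path, which amounts to re-running the standard Alt--Godau reachability induction one layer at a time in the height direction. Restricting attention throughout to cells of height at most $k$ is what keeps this correspondence finite and what enforces the outlier budget.
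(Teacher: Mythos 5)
Your proposal is correct and follows essentially the same approach as the paper: embedding the classical free space diagram of an outlier tuple as a stack of outlier cells, translating between monotone reparametrisation paths and reachability via the Alt--Godau characterisation, and using the accumulated height (plus the starting height and the ending-point inequality) to count omitted vertices. Your write-up is in fact more explicit than the paper's own proof about the height bookkeeping for omitted prefixes, internal vertices and suffixes, which the paper only sketches.
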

	\begin{proof}
		If the $ k $-outlier distance is at most $ \varepsilon $, there exists a limit of reparametrisation between an outlier curve tuple that realises the $ k $-outlier distance. Using only the vertices in the outlier curve tuple we get a classical free space diagram. The reparametrisation is a curve in this classical free space diagram. The sequence of cells given by that curve corresponds to a sequence of outlier cells in the outlier free space diagram. We start at a starting point in the outlier free space and end at an ending point. The curve through our sequence of outlier cells gives us the reachability of the ending point. The outlier tuple counts the number of outliers, so the height of the ending point is at most $ k $.  
		
		If an ending point is reachable with a height up to $ k $, the reachability is represented through a curve in the outlier free space from a starting point to that ending point. This curve gives us a sequence of outlier cells that correspond to classical cells which in turn give us a reparametrization of two curves. These two curves are an outlier tuple because the number of outliers has to be up to $ k $ and the sequence of curves gives us a sequence of edges for both curves which follow the definition of outlier curve tuples. Hence the reparametrisation of an outlier curve tuple proves that the $ k $-outlier distance is at most $ \varepsilon $.
	\end{proof}

	\begin{theorem}
		Algorithm~\ref{Alg1} decides correctly if the $ k $-outlier Fréchet distance is at most $ \varepsilon $. The runtime of Algorithm~\ref{Alg1} is in $ O(nmk^3) $. The space usage is in $ O(nmk^2) $.
	\end{theorem}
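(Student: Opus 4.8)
By the previous lemma it suffices to show that Algorithm~\ref{Alg1} returns \textbf{True} exactly when some ending point is reachable in the outlier free space at height at most $k$. The plan is to prove by induction over the processing order that, at the moment the algorithm reads any cell, interval or point, the value stored for each of its predecessors already equals the correct reachable set. The base case is handled by the second loop: a starting point $(i,j)$ with $i+j\le k$ is inserted into $P[(i,j),i+j]$ exactly when $\norm{X(x_i)-Y(y_j)}\le\varepsilon$, which matches the reachable starting point of an outlier curve tuple whose first $i$ vertices of $X$ and first $j$ vertices of $Y$ are dropped. For the inductive step I would argue that processing a cell $C[(i,j),(a,b),h]$ in the third loop from its two predecessor intervals $I[(i,j),(a,j),h]$ and $I[(i,j),(i,b),h]$ reproduces the classical Alt--Godau propagation inside the corresponding free-space cell, so that $R^h(C)$ and $R^v(C)$ are correct and are merged into the top- and right-edge intervals $I[(i,b),(a,b),h+\text{L}_v]$ and $I[(a,j),(a,b),h+\text{L}_h]$. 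Reachability of the ending points read off in the fourth loop then follows from the predecessor relation of points.

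The crux of correctness is that the lexicographic order on the identifying tuples is a valid topological order of the predecessor DAG simultaneously for cells, intervals and points. I would verify this directly from the predecessor definition: every predecessor cell $C[(i-l-1,j),(i,b),h-l]$ of $I[(i,j),(i,b),h]$ has strictly smaller first index $i-l-1<i$, and $C[(i,j-l-1),(a,j),h-l]$ has strictly smaller second index; hence each interval is completed before any cell that uses it, and each cell's two predecessor intervals are completed before the cell is processed. The height bookkeeping must then be checked to be consistent: a cell at height $h$ that skips $l$ vertices vertically produces top-edge reachability at height $h+\text{L}_v=h+l$, so it lands precisely in the height-$(h+l)$ interval, and summing these offsets along a monotone path equals the total number of skipped vertices, i.e. the number of outliers. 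This simultaneous consistency of the ordering and of the additive height is the part I expect to require the most care.

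For the runtime I would bound each loop separately. The first loop runs over $[n]\times[m]\times[k+1]\times[k]$, i.e. $O(nmk^2)$ iterations, each free-space interval being the intersection of a segment with a disk, computable in $O(1)$. The second loop has $O(k^2)$ iterations of $O(1)$ work. The dominant third loop runs over $[n]\times[m]\times[k+1]\times[k+1]\times[k]$, giving $O(nmk^3)$ iterations, and the key point is that each iteration costs only $O(1)$. This rests on the observation that all cells contributing to a fixed outlier interval share the same free-space interval on the target edge, and that, exactly as in the classical case, a cell's reachable top (resp.\ right) interval is a suffix of that free interval ending at its fixed far endpoint. Hence every outlier interval is a single interval, the union defining it reduces to keeping the longest suffix, and both computing $R^h(C),R^v(C)$ from the two predecessor intervals and updating the two target intervals are $O(1)$. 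The fourth loop contributes $O(k^3)$, so the total is $O(nmk^3)$.

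The same suffix observation drives the space bound: each outlier interval is stored in $O(1)$ space, and the horizontal and vertical intervals are indexed by a corner, a length in $[k+1]$ and a height in $[k]$, giving $O(nmk^2)$ of them; the free-space intervals $F$ and the points $P$ are of lower order $O(nmk)$, so the total space is $O(nmk^2)$. I would stress that without the single-interval property each interval would cost $\Theta(k)$ to store and update, degrading both bounds to $O(nmk^3)$; establishing that the reachable outlier intervals are contiguous is therefore the technical heart of the estimates, and the main obstacle alongside the topological consistency of the ordering.
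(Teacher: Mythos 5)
Your proposal is correct and follows essentially the same route as the paper's proof: induction over the lexicographic processing order via the predecessor relation, then a loop-by-loop count with $O(1)$ work and storage per outlier interval. The only place you go beyond the paper is in explicitly verifying the topological consistency of the ordering and in the suffix/single-interval observation that justifies the $O(1)$ cost per interval, which the paper asserts without argument; this fills a gap rather than changing the approach.
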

	\begin{proof}
		We will prove this by induction on the order of the outlier free space cells in the algorithm. 
		The first cell is a classical free space cell and only depends on the starting point $ (0,0,0) $. By initializing this point, we can compute the predecessors' intervals of this cell, so we only need to update the cell intervals accordingly. This is done as in \cite{alt1995computing} by Alt and Godau. There is no outlier to account for, so the heights of the other corners are $ 0 $.
		
		In the general case, the cell $ C[(i, j), (a, b), h] $ only depends on other predecessor intervals and an eventual starting point. But the intervals have lower lexicographical order and therefore are already computed. So we can compute the reachable outlier intervals of the current cell. With $ L_h(C) $ and $ L_v(C) $ we know the number of outliers on $ \langle p_i, p_a \rangle $ and $ \langle p_j, p_b \rangle $. We add the respective number to the height of outlier intervals of the cell. That completes the computation of that cell. 
		
		Now that the outlier reachable free space is computed we have to check if any of the ending points are reachable. An ending point $ (n - i, m - j) $ is $ i + j $ vertices apart from the ends of the curves, so we can only check heights $ h $ below $ k - i + j $. This can be done by checking if $ (n - i, m - j, h) $ is in the predecessor outlier interval $ I[(n - i - 1, m - j), (n - i, m - j), h] $.
		
		Now to the runtime.
		The first nested loops define the data structure we have to maintain. Every reachable interval needs $ O(1) $ space. So the whole algorithm needs $ O(nmk^2) $ space for the number of reachable free space intervals. 
		
		The second nested loop defines the running time of the algorithm. Computing the updates needs $ O(1) $ time. So the algorithm needs $ O(nmk^3) $ time because of the number of outlier free space cells. 
		Putting the starting point in the reachable free space needs $ O(k^2) $ time because we can place every starting point in constant time. 
		Checking if an ending point is in the reachable free space needs $ O(k^3) $ time because for every ending point $ k $ different heights have to be tested.
	\end{proof}

	\subsection{Critical values}
	
	We assume that $ n \geq m $.
	The three types of critical values, Alt and Godau introduced, also cover all cases of the outlier free space. The first case covers the different starting and ending points in the outlier free space, of which there are $ O(k^2) $ many. The second type describes the opening of free space intervals. A vertex and an edge define these. There are $ O(nmk) $ critical values of this type for the $ O(mk) $ edges and $ O(n) $ points. The last type is the monotonicity event or the opening of a passage. Here two vertices and an edge are the defining features. So we get $ O(n^2) $ for the vertices, $ O(mk) $ for the edge and $ O(n^2mk) $ in total of this type.
	
	\begin{theorem}
	    \label{thm:main}
	    The $ k $-outlier Fréchet distance can be computed in $ O((n^2mk + nmk^3) \log n)$.
	\end{theorem}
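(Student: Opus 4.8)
The plan is to reduce the optimization problem to the decision problem solved by Algorithm~\ref{Alg1} through the standard technique of searching over critical values: I would argue that $\tfrechet(k, X, Y)$ is attained at one of the $O(n^2mk)$ critical values enumerated in the previous subsection, and then binary search over them using the decision procedure as an oracle.

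First I would establish monotonicity of the decision in $\varepsilon$. Enlarging $\varepsilon$ only grows each free space interval, hence each reachable outlier interval and each outlier point; so if an ending point is reachable on height at most $k$ for some $\varepsilon$, it remains reachable for every $\varepsilon' \ge \varepsilon$. Therefore $\tfrechet(k, X, Y)$ equals the infimum of the $\varepsilon$ for which Algorithm~\ref{Alg1} returns True. The combinatorial structure of the outlier free space, i.e.\ which horizontal and vertical free space intervals are non-empty and which monotone passages between cells are open, changes only at a critical value, and between two consecutive critical values the Boolean output of Algorithm~\ref{Alg1} is constant. Hence the distance is attained exactly at one of them.

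Next I would check that the three types from the previous subsection remain complete for the outlier free space. The only differences from Alt and Godau are that the curves now carry $O(mk)$ edges (the original edges together with all shortcuts of length at most $k$) and that cells are stacked at integer heights; but heights only track the accumulated number of outliers and create no new geometric events, and a shortcut is just another edge, so the opening of an interval (type~2), the opening of a passage (type~3), and the $O(k^2)$ start/end configurations (type~1) still capture every change. This gives $O(n^2mk)$ critical values in total, dominated by type~3. The algorithm then computes all of them, sorts them in $O(n^2mk\log(n^2mk)) = O(n^2mk\log n)$ time (using $m \le n$ and $k = O(n)$), and binary searches, invoking the decision procedure at each probe; the smallest probed value returning True is the answer. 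The search makes $O(\log(n^2mk)) = O(\log n)$ oracle calls, each costing $O(nmk^3)$ by the previous theorem, for $O(nmk^3\log n)$ total, and summing yields the claimed $O((n^2mk + nmk^3)\log n)$ bound.

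I expect the main obstacle to be the completeness argument rather than the search itself: one must verify that, even with shortcuts and the height dimension, no event outside the three listed types can change reachability. In particular, forming an outlier interval as the union $\bigcup_{l} R^v(C[(i,j),(i+l,b),h])$ (resp.\ the horizontal analogue with $R^h$) over its $O(k)$ predecessors combines reachable intervals whose endpoints already correspond to type-2 and type-3 events of the individual cells, so the union introduces no new breakpoints and therefore no new critical values; making this precise is what secures the reduction.
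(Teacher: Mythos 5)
Your proposal is correct and follows essentially the same route as the paper: compute and sort the $O(n^2mk)$ critical values from the preceding subsection, then binary search over them using Algorithm~\ref{Alg1} as the decision oracle, exactly as Alt and Godau do for the classical Fréchet distance. The additional detail you supply on monotonicity in $\varepsilon$ and on the completeness of the three critical-value types is justification the paper leaves implicit rather than a different argument.
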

	\begin{proof}
	    The argument is the same as in \cite{alt1995computing} by Alt and Godau.
	    First, compute and sort the $ O(n^2mk) $ critical values and then use the decision algorithm with the runtime of $ O(nmk^3) $ in a binary search to find the $ k $-outlier Fréchet distance.
	\end{proof}

	\section{Conclusion}
	We introduced a new Fréchet variant to work with real-life data. We can decide the undirected $k$-outlier Fréchet distance of two curves of complexity $m,n$ in $ O(nmk^3) $ time and the directed distance in $ O(nmk^2) $ time.
	
	\bibliography{bibba}

\begin{thebibliography}{10}

\bibitem{alt2003matching}
Helmut Alt, Alon Efrat, Günter Rote, and Carola Wenk.
\newblock Matching planar maps.
\newblock {\em Journal of Algorithms}, 49(2):262--283, 2003.
\newblock \href {https://doi.org/10.1016/S0196-6774(03)00085-3}
  {\path{doi:10.1016/S0196-6774(03)00085-3}}.

\bibitem{alt1995computing}
Helmut Alt and Michael Godau.
\newblock Computing the {F}r\'{e}chet distance between two polygonal curves.
\newblock {\em International Journal of Computational Geometry \&
  Applications}, 05(01n02):75--91, 1995.
\newblock \href {https://doi.org/10.1142/S0218195995000064}
  {\path{doi:10.1142/S0218195995000064}}.

\bibitem{avraham2015discrete}
Rinat~Ben Avraham, Omrit Filtser, Haim Kaplan, Matthew~J. Katz, and Micha
  Sharir.
\newblock The discrete and semicontinuous {F}r\'{e}chet distance with shortcuts
  via approximate distance counting and selection.
\newblock {\em ACM Trans. Algorithms}, 11(4), apr 2015.
\newblock \href {https://doi.org/10.1145/2700222} {\path{doi:10.1145/2700222}}.

\bibitem{bringmann2019polyline}
Karl Bringmann and Bhaskar~Ray Chaudhury.
\newblock {Polyline Simplification has Cubic Complexity}.
\newblock In {\em 35th International Symposium on Computational Geometry (SoCG
  2019)}, volume 129 of {\em Leibniz International Proceedings in Informatics
  (LIPIcs)}, pages 18:1--18:16, Dagstuhl, Germany, 2019. Schloss
  Dagstuhl--Leibniz-Zentrum fuer Informatik.
\newblock \href {https://doi.org/10.4230/LIPIcs.SoCG.2019.18}
  {\path{doi:10.4230/LIPIcs.SoCG.2019.18}}.

\bibitem{buchin2009exact}
Kevin Buchin, Maike Buchin, and Yusu Wang.
\newblock Exact algorithms for partial curve matching via the {F}r\'{e}chet
  distance.
\newblock In {\em Proceedings of the Twentieth Annual ACM-SIAM Symposium on
  Discrete Algorithms}, SODA '09, page 645–654, USA, 2009. Society for
  Industrial and Applied Mathematics.
\newblock URL: \url{https://dl.acm.org/doi/abs/10.5555/1496770.1496841}.

\bibitem{buchin2007computability}
Maike Buchin.
\newblock {\em On the Computability of the Frechet Distance Between
  Triangulated Surfaces}.
\newblock PhD thesis, Freien Universität Berlin, 2007.
\newblock URL: \url{http://dx.doi.org/10.17169/refubium-6111}.

\bibitem{buchin2014computing}
Maike Buchin, Anne Driemel, and Bettina Speckmann.
\newblock Computing the {F}r\'{e}chet distance with shortcuts is np-hard.
\newblock In {\em Proceedings of the Thirtieth Annual Symposium on
  Computational Geometry}, SOCG'14, page 367–376, New York, NY, USA, 2014.
  Association for Computing Machinery.
\newblock \href {https://doi.org/10.1145/2582112.2582144}
  {\path{doi:10.1145/2582112.2582144}}.

\bibitem{driemel2013realistic}
A.~Driemel.
\newblock {\em Realistic analysis for algorithmic problems on geographical
  data}.
\newblock PhD thesis, Eindhoven University of Technology, Utrecht University,
  Netherlands, 2013.

\bibitem{driemel2013jaywalking}
Anne Driemel and Sariel Har-Peled.
\newblock Jaywalking your dog: Computing the {F}r\'{e}chet distance with
  shortcuts.
\newblock {\em SIAM Journal on Computing}, 42(5):1830--1866, 2013.
\newblock \href {https://doi.org/10.1137/120865112}
  {\path{doi:10.1137/120865112}}.

\bibitem{imai1988polygonal}
Hiroshi Imai and Masao Iri.
\newblock Polygonal approximations of a curve — formulations and algorithms.
\newblock In {\em Computational Morphology}, volume~6 of {\em Machine
  Intelligence and Pattern Recognition}, pages 71--86. North-Holland, 1988.
\newblock \href {https://doi.org/10.1016/B978-0-444-70467-2.50011-4}
  {\path{doi:10.1016/B978-0-444-70467-2.50011-4}}.

\bibitem{maheshwari2018approximating}
Anil Maheshwari, Jörg-Rüdiger Sack, and Christian Scheffer.
\newblock Approximating the integral {F}r\'{e}chet distance.
\newblock {\em Computational Geometry}, 70-71:13--30, 2018.
\newblock \href {https://doi.org/10.1016/j.comgeo.2018.01.001}
  {\path{doi:10.1016/j.comgeo.2018.01.001}}.

\bibitem{vandekerkhof2019global}
Mees van~de Kerkhof, Irina Kostitsyna, Maarten L{\"o}ffler, Majid Mirzanezhad,
  and Carola Wenk.
\newblock {Global Curve Simplification}.
\newblock In {\em 27th Annual European Symposium on Algorithms (ESA 2019)},
  volume 144 of {\em Leibniz International Proceedings in Informatics
  (LIPIcs)}, pages 67:1--67:14, Dagstuhl, Germany, 2019. Schloss
  Dagstuhl--Leibniz-Zentrum fuer Informatik.
\newblock \href {https://doi.org/10.4230/LIPIcs.ESA.2019.67}
  {\path{doi:10.4230/LIPIcs.ESA.2019.67}}.

\end{thebibliography}
\end{document}